\newif\ifcomments  
\newcommand{\bfg}{\ensuremath{\mathbf{g}}}
\newcommand{\bfv}{\ensuremath{\mathbf{v}}}
\newcommand{\bfz}{\ensuremath{\mathbf{z}}}
\newcommand{\calB}{\ensuremath{\mathcal{B}}}
\newcommand{\calD}{\ensuremath{\mathcal{D}}}
\newcommand{\calF}{\ensuremath{\mathcal{F}}}
\newcommand{\calP}{\ensuremath{\mathcal{P}}}
\renewcommand{\Pr}{\mathop{\mathbf{Pr}}}
\newtheorem{lemma}{Lemma}[section]
\newtheorem{definition}[lemma]{Definition}
\newcommand{\vast}{\bBigg@{4}}
\newcommand{\Vast}{\bBigg@{5}}
\newcommand{\ex}[2]{{\ifx&#1& \mathbb{E} \else
\underset{#1}{\mathbb{E}} \fi \left[#2\right]}}
\newcommand{\pr}[2]{{\ifx&#1& \mathbb{P} \else
\underset{#1}{\mathbb{P}} \fi \left[#2\right]}}
\newcommand{\ltwo}[1]{\left\|#1\right\|_2}
\DeclarePairedDelimiterX{\infdivx}[2]{(}{)}{%
  #1\;\delimsize\|\;#2%
}
\newcommand{\mypar}[1]{\smallskip
	\noindent{\textbf{{#1}:}}}
\renewcommand{\epsilon}{\varepsilon}
\newcommand{\clip}[2]{{\sf clip}\left(#1,#2\right)}
\setlist{nolistsep}
\setlist[itemize]{noitemsep, topsep=0pt}
\setlist{nolistsep}
\setlist[itemize]{noitemsep, topsep=0pt}
\newenvironment{AlgorithmBox}[1][Algorithm]
{%
  \begin{tcolorbox}[
    title=#1,         
    colback=white,    
    colframe=blue,   
    boxsep=5pt,       
    arc=3pt,          
    outer arc=3pt,    
    fonttitle=\bfseries
  ]
}
{%
  \end{tcolorbox}
}
\definecolor{darkgreen}{RGB}{46,139,87}
\title{Hush! Protecting Secrets During Model Training:\\
An Indistinguishability Approach}
\author{
Arun Ganesh\thanks{Google Research \texttt{arunganesh@google.com}} \and
Brendan McMahan\thanks{Google Research \texttt{mcmahan@google.com}} \and
Milad Nasr\thanks{Google DeepMind. \texttt{srxzr@google.com}} \and
Thomas Steinke\thanks{Google DeepMind \texttt{steinke@google.com}}\and
Abhradeep Thakurta\thanks{Google DeepMind \texttt{athakurta@google.com}}
}
\begin{document}

\maketitle

\begin{abstract}
We consider the problem of \textit{secret protection}, in which a business or organization wishes to train a model on their own data, while attempting to not leak secrets potentially contained in that data via the model. The standard method for training models to avoid memorization of secret information is via differential privacy (DP). However, DP requires a large loss in utility or a large dataset to achieve its strict privacy definition, which may be unnecessary in our setting where the data curator and data owner are the same entity. We propose an alternate definition of secret protection that instead of targeting DP, instead targets a bound on the posterior probability of secret reconstruction. We then propose and empirically evaluate an algorithm for model training with this secret protection definition. Our algorithm solves a linear program to assign weights to examples based on the desired per-secret protections, and then performs Poisson sampling using these weights. We show our algorithm significantly outperforms the baseline of running DP-SGD on the whole dataset.
\end{abstract}

\section{Introduction}
\label{sec:intro}

We consider the problem of an organization training a model on an internal dataset containing some number of secrets. For example, consider an algorithmic trading company. They company could be interested in fine-tuning an internal code completion model used by all employees on their codebase because they are using their own internal libraries that a public code completion model would have no knowledge of. However, their code base likely contains information that they would not like their entire employee base to have access to, and hence want to prevent the model from memorizing. For example, they may have many teams working on different trading algorithms for different markets, and keep different teams' codebases and knowledge siloed to reduce the risk of leaking any of their trade secrets which give them a competitive advantage. In this case, if they trained the code completion model without any safeguards, it's possible the model could be prompted by any employee to reproduce code they should not be able to access. While the definitions and techniques in this paper could be applied broadly beyond this example, to keep the discussion concrete we will focus on this example throughout the paper.

A standard technique for preventing memorization of sensitive information is differential privacy (DP) \cite{DMNS}, and we could use an algorithm like DP-SGD \cite{song2013stochastic, BST14} or its variants, which clip gradients to a bounded $\ell_2$-norm and then add noise to prevent memorization of specific examples. To protect the secrets in a dataset, we could target e.g. 'secret-level' DP, i.e. ensure that the outputs of training on the dataset and a copy of the dataset with each example containing a certain secret being changed arbitrarily satisfy some indistinguishability guarantee.  However, DP gives strong protections necessary for a setting where a data curator is using external data belonging to third-party users. The example trading company is both the model trainer and the owner of the codebase they are training on, hence these strong protections are unnecessary. In this paper, we propose alternatives to DP for defining formal secret protection while training models, propose methods for achieving these definitions, and give empirical evaluations of the practicality of these goals methods.

\subsection{Our Contributions}

Our main contributions are a definition of secret protection that is more suitable for the threat model we study, and algorithms tailored to this definition.

\mypar{A definition for secret protection} In Section~\ref{sec:problemForumuation} we give our proposed definition of secret protection. Our definition differs from standard DP guarantees in the following ways: (i) It gives each secret a different degree of protection, rather than requiring a uniform protection (ii) It assumes the existence of secrets is public information, rather than trying to keep membership private (iii) It does not allow examples to arbitrarily change for two adjacent datasets.

Our definition bounds the posterior reconstruction probability for each secret given a prior reconstruction probability. While tools from \cite{hayes2023bounding} do this tightly if we can exactly compute the privacy loss distribution of our algorithm, in our empirical setting they suffer from numerical instability issues. We found divergence bounds to be more robust to these issues, and give a divergence-based method for bounding reconstruction probabilities (\cref{lem:generalized_fano}) that is tight for a given divergence bound. This improves on the loose R\'enyi-divergence-based reconstruction bounds of \cite{balle22reconstructing}.

\mypar{Algorithms tailored to our new definition} We next study algorithms for model training with secret protection. The main algorithmic challenge is that given a dataset, even if each secret has a uniform privacy guarantee we are targeting, some secrets may appear more frequently in the dataset and require more noise. Instead, we would like to bound the number of appearances of each secret in the dataset, so that each secret appears a number of times roughly proportional to its desired privacy guarantee. We propose a simple but novel algorithm for selecting a subset of the dataset to use in training that achieves this property. At a high level, our algorithm uses a linear program to assign each example a fractional weight, and uses these weights to choose sampling probabilities for each example when forming batches. We empirically demonstrate that our algorithm allows us to achieve a $\approx 8 \times$ reduction in the noise and $\approx 7.5\%$ reduction in test loss compared to not doing any preprocessing of the dataset. 

\subsection{Related Work}\label{sec:relatedwork}

The observation that DP even with weak privacy parameters protects against reconstruction and suggestion to relax the DP parameters when membership inference is not a concern are well-known. \cite{bhowmick19protection}, motivated by this observation, give minimax optimal local $\epsilon$-DP mechanisms in the high-$\epsilon$ regime for noising vectors, as opposed to past work which only considered the low-$\epsilon$ regime. \cite{hayes2023bounding} give upper bounds on the probability of reconstructing an example included in DP-SGD, and an attack which gives a nearly matching empirical lower bound. They also show different algorithms with the same approximate DP parameters can give very different protections against reconstruction, showing standard DP definitions are not good predictors of protection against reconstruction. As we will discuss later in \cref{sec:protReconst}, the bounds of \cite{hayes2023bounding} are tight but can require highly numerically unstable computations in some cases.

A key component of our definition of secret protection is that we offer different privacy protections to different secrets, whereas most of the DP literature studies uniform privacy protections. \cite{yu2023individual} observed that DP-SGD often provides different privacy protections to different individuals. However, this is an artifact of average-case training behaviors that occur in practice, and in their work DP-SGD still is implemented to target a uniform worst-case privacy guarantee, as opposed to our algorithms which explicitly target non-uniform worst-case protections. 

Our definition of secret protection is somewhat analogous to user-level DP, where the privacy unit is all examples owned by a single user. How to fit a standard algorithm like DP-SGD to user-level DP is well-studied in the setting where each example is owned by a single user \cite{charles2024finetuninglargelanguagemodels, chua2024mind}. Across the broader DP literature, the setting where an example can be owned by multiple users has been indirectly studied in algorithms for queries on graphs. For example, \cite{dong2022r2t} study the problem of graph pattern counting under node-level DP. Here, each instance of a graph pattern can be viewed as an example belonging to all the nodes in it. \cite{ganesh2025itsmydatatoo} recently proposed and studied a \textit{multi-attribution} DP definition, similar to node-level DP but where the graph is assumed to be public, and with a focus on model training rather than graph queries.

\section{Background on DP-SGD and Privacy Accounting}

DP-SGD \cite{song2013stochastic, BST14, DP-DL}, formally stated as \cref{fig:dpsgd}, is a canonical private learning algorithm. In each round it samples a batch of examples, computes the gradient of the loss on these examples, clips the gradients to a bounded $\ell_2$-norm, and adds Gaussian noise to privatize their sum. The privatized sum is then passed to a first-order method such as SGD or Adam.

\begin{figure}
\begin{algorithm}[H]
\caption{DP-SGD with Poisson sampling}
\label{fig:dpsgd}
\textbf{Inputs:} Dataset $D = \{x_1, \ldots, x_n\}$, loss $\ell$, optimizer update function $update$, secrets $S$, number of rounds $T$, target batch size $B$, clip norm $C$, noise multiplier $\sigma$, initial model $\theta_0$.
\begin{algorithmic}[1]
\State $\clip{\bfv}{C} := \bfv \cdot \min\{1, C / \ltwo{\bfv}\}$
\For{$t \in [T]$}
\State $\calB_i \leftarrow$ batch including $x_i$ independently w.p. $p$.
\State $\bfg_i \leftarrow \frac{1}{B}\sum_{x_i \in \calB_i} \nabla \clip{\ell(\theta_{i-1}; x_i)}{C}$
\State $\bfz_i \leftarrow N(0, \frac{C^2 \sigma^2}{B^2} \mathbb{I})$
\State $\theta_i \leftarrow update(\theta_{i-i}, \bfg_i + \bfz_i)$. \Comment{e.g. for SGD, $update(\theta, \bfg) = \theta - \eta \bfg$}
\EndFor
\end{algorithmic}
\end{algorithm}
\end{figure}

DP-SGD is commonly analyzed via \textit{privacy loss distributions}. Given two distributions $P, Q$, their \textit{privacy loss random variable} is given by $\log(P(x) / Q(x)), x \sim P$. The privacy loss distribution is the distribution of this random variable.
The \textit{blow-up function} is closely related to the privacy loss distribution. Given two distributions $P, Q$, their blow-up function is the function $B_{P,Q}(x) := \max_{E: Q(E) \leq x} P(E)$. In other words, the blow-up function gives the maximum true positive rate of a binary classifier to identify samples from $Q$, with false positive rate at most $x$ on $P$. \cite{zhu22characteristic} observed that the privacy loss distribution and blow-up function are equivalent representations, i.e. the privacy loss distribution of $P, Q$ is easily computable from its blow-up function and vice-versa. 

Under a given definition of adjacency $\simeq$ (i.e., a relation that defines what pairs of datasets are adjacent), a \textit{dominating pair} $P, Q$ for a mechanism $M$ is a pair such that $B_{P, Q}(x) \geq B_{M(D), M(D')}(x)$ for all $x \in [0, 1], D \simeq D'$. A dominating pair $P, Q$ is \textit{tight} if there exists $D \simeq D'$ such that $B_{P, Q} = B_{M(D), M(D')}$. For DP-SGD when $D \simeq D'$ if $D, D'$ differ in a group of examples of bounded size, tight dominating pairs are well-known \cite{charles2024finetuninglargelanguagemodels} and their privacy loss distributions are easily computable via open-source software \cite{dp_accounting}.

\section{Secret Protection}
\label{sec:problemForumuation}

We now introduce our proposed definition of secret protection:

\begin{definition}
Suppose as input, an algorithm $A$ receives
\begin{itemize}[leftmargin=1em]
    \item A dataset $D = \{x_1, \ldots, x_n\} \in \calD^n$ in data universe $\calD$,
    \item A list of secrets $S = \{y_1, \ldots, y_m\}$, and for each secret $y_j$ a predicate $T_j: \calD^2 \rightarrow \{0, 1\}$, where $T_j(x_i, x_i') = 1$ means that example $x_i$ is equivalent to $x_i'$ with only the secret $y_j$ transformed.
    \item A function $E: D \rightarrow 2^S$, where $E(x_i)$ represents all secrets in $S$ contained in $x_i$.
\end{itemize}

We define inputs $I = (D = \{x_1, \ldots, x_n\}, S, E)$, $I' = (D' = \{x_1', \ldots, x_n'\}, S, E)$, as differing only in secret $j$, denoted $I \simeq_j I'$, if for all $i \in [n]$ either $y_j \notin E(x_i)$ or $T_j(x_i, x_i') = 1$. In other words, $D$ and $D'$ differ only in examples that contain the secret $y_j$, where they only differ by information pertaining to the secret, and otherwise the inputs (including $S$ and $E$) don't differ. 

Then, $A$ satisfies $ \{(p_j, r_j)\}$-secret protection if it satisfies the following: For any $j$, given a distribution $\calP$ over $[k]$ such that $\calP(i) \leq p_j$ for all $i$ and a set of candidate inputs $\{I_1, I_2, \ldots, I_k\}$ such that for any $i, i'$, $I_{i} \simeq_j I_{i'}$, for any adversary $B$, $Pr_{i \sim \calP, A}[B(A(I_i)) = i] \leq r_j$. 

\label{def:secretprotection}
\end{definition}

Suppose each all $p_j$ were equal and all $r_j$ were equal, $T_j$ was always equal to 1, and we enforced the bound of $r_j$ via a divergence (such as a R\'enyi or hockey-stick divergence) bound on $A(I), A(I')$ for all input pairs $I, I': \exists j: I \simeq_j I'$. If we treated each secret $y_j$ as a user, and $D_j$ as the set of examples owned by that user \cref{def:secretprotection} would recover a DP definition for user-level DP, where an example can be owned by multiple users, and which examples are owned by which users is public information.
Hence, the key differences between \cref{def:secretprotection} and user-level DP are as follows:

\begin{itemize}[leftmargin=1em]
    \item \textbf{Public membership:} We do not protect against knowing about the existence of secrets and which examples contain which secrets, just the contents of examples (i.e., the secrets themselves). This is encoded in the definition by the fact that the structure of the secrets given by $\{D_1, \dots, D_m\}$ is shared between adjacent inputs. We note that this is analogous to the fixed-graph DP definition recently proposed in \cite{ganesh2025itsmydatatoo}. 
    \item \textbf{Non-uniform protection:} We do not require uniform ``privacy parameters'' for the secrets, as not all secrets are equally sensitive.
    \item \textbf{Privacy guarantee:} Our DP definition requires us to protect against reconstruction attacks, rather than bound the divergence of all pairs of outputs given by adjacent inputs. A divergence bound usually implies a reconstruction bound as we will demonstrate in \cref{lem:generalized_fano}, so this can be viewed as a loosening of the privacy guarantee.
    \item \textbf{Restricted adjacency:} The restriction of adjacency via $T_j$ allows us to release public information within examples that also contain secrets without violating the privacy definition.
\end{itemize}

\subsection{Why Is DP Overambitious?}\label{sec:overambitious}

We identify reasons why DP may be overly protective for the needs of secret protection, and state weaker goals for model training encoded in \cref{def:secretprotection} that allow us to improve utility. 

First, as previously mentioned, differential privacy often targets a setting where the model trainer (e.g., a data curator working for a hospital) is a different entity than the data owners with privacy concerns (e.g., the hospital's patients). In turn, strong privacy protection is usually an ethical and/or legal requirement for using the data. In contrast, we are focused on a setting where \textit{the model trainer and data owner are the same entity}, and hence they may be willing to forgo the strong protection of DP in favor of utility. For example, a trading company may be willing to tolerate lower privacy standards for training on its own codebase, if they believe the benefits of having a well-trained model outweigh the losses of forgoing very strong privacy guarantees.

Second, differential privacy often provides a uniform privacy protection to all involved users (with some exceptions we discuss later). However, \textit{not all secrets are equally sensitive}. By blindly applying the standard definitions of DP (which provide a uniform privacy protection to all secrets) the company might sacrifice utility to protect information that the company is willing to risk a higher chance of leaking. For example, in the trading company's codebase, consider the following two Python files:

\begin{figure}[H]
\begin{minipage}{0.5\textwidth}
\begin{AlgorithmBox}[Example of \texttt{secret\_investment.py}]
{\color{red}
\begin{verbatim}
"""Trains an investment model
for a secret purpose."""

import file_utils
import torch

def training_loop():
    ...
\end{verbatim}
}
\end{AlgorithmBox}
    
\end{minipage}
\begin{minipage}{0.5\textwidth}
\begin{AlgorithmBox}[Example of \texttt{file\_utils.py}]
{\color{darkgreen}
\begin{verbatim}
"""Utils for accessing file
system during model
training."""

def load_dataset(name: str):
    ...
    

\end{verbatim}
}
\end{AlgorithmBox}
\end{minipage}
\end{figure}

\texttt{secret\_investment.py} is a file which contains details of how a trading model is defined and trained for an investment the trading company wishes to keep secret, and hence is likely highly sensitive. \texttt{file\_utils.py} contains tools for accessing the file system, which may be still be mildly sensitive but are less likely to be damaging if leaked. So we might tolerate weaker privacy protections for \texttt{file\_utils.py} than for \texttt{secret\_investment.py}. 
We note that non-uniform privacy protections have appeared in various forms in the literature, and we do not claim to be the first to propose this relaxation of privacy. For example \cite{jorgensen15pdp} proposed ``personalized'' DP, where each user specifies their own privacy budget. However, an important qualitative difference between our secret protection definition and these definitions is the assumptions required for a non-uniform privacy protection to offer improvements in utility. When we have data owners who are a different entity than the model trainer and have no investment in the success of the model trainer, data owners who act in their own self-interest will always opt for the maximum privacy budget available to them. Furthermore, legal obligations such as GDPR compliance may obligate model trainers to achieve a target DP guarantee for all users \cite{cummings18therole}, even if some users may be unconcerned with their privacy. Hence in these settings, relaxations such as personalized DP may be inapplicable, or may not offer any improvements in utility when they are applicable. In contrast, in our setting the model trainer and data owner are the same entity, and hence the legal requirement does not exist, and loosening privacy of some secrets may be in the data owner's own best interest. So, the gains in utility from allowing non-uniform protections are more likely to be realizable.

Finally, differential privacy usually protects against leaking any bit of information from an example, whereas we are only concerned with protecting the secrets, which are typically larger than one bit. This allows us to relax the privacy definition in several different ways. First, while DP protects against membership inference (i.e., knowing whether or not an example is in a dataset), the mere existence of a business secret may not be a privacy concern. For example, an employee might be able to see that there is a \texttt{secret\_investment} library, but without knowing more details this is unproblematic. In this setting, our privacy definition does not need to protect against membership inference, and we can tailor the definition to protecting against stronger privacy violations like reconstructing the contents of  \texttt{secret\_investment.py}. Second, even for the problem of reconstruction, not all parts of a training example may be problematic to reconstruct. For example, if \texttt{secret\_investment.py} is used as an example in training, the fact that \texttt{file\_utils} and \texttt{torch} are imported together is fine to release even if the learning algorithm in the same file is a secret.

\subsection{Defining $S, E, T$}

In our secret protection definition we have deliberately left the definition of $S, E, T_j$, i.e. the secrets, which examples include which secrets, and what constitutes changing only a secret in an example, to be specified by the model trainer. We emphasize that while we have treated these as givens, defining these quantities is perhaps the most challenging part of using our secret protection definition in practice. The main challenge is that in contrast with e.g. user-level DP, where an example may come with metadata (e.g., the username of the account that uploaded the example) that easily lets us identify users and associations between examples and users, a definition of a secret and identifying an example as containing a secret are both likely more subjective.

We believe designing effective techniques for identifying secrets and examples containing secrets in various types of datasets is an important open problem. For our specific motivating example of a code dataset, we believe there may be hope in automating the process of identifying secrets and examples containing them, using structure built into the dataset. Dependencies between code libraries can hint at where in the codebase a secret may reappear. For example, if an algorithm which is a trade secret appears in a library that is commonly imported, it may be reasonable to tag other libraries importing this library as containing the secret. Similarly, code is often siloed by user groups that could hint at secret containment, and also has a commit history that associates code with certain contributors. By identifying user groups and code contributors working on sensitive projects, we could infer that files which they have modified contain a secret associated with the sensitive project. 

Unfortunately, because datasets containing actual secrets cannot be made public or used for public research, we did not pursue an empirical investigation of techniques for defining these quantities as part of this work. We hope that future work can make progress on this  problem.

\subsection{Bounding the Posterior Probability}\label{sec:protReconst}

We now discuss different strategies for bounding the posterior reconstruction probability $r_j$ given the prior probability $p_j$. Recall that the \textit{blow-up function} of two distributions $P, Q$ is $B_{P, Q}(x) := \max_{E: Q(E) \leq x} P(E).$ Theorem 2 of~\cite{hayes2023bounding} shows that if for all $j$ and any inputs $I \simeq_j I'$ that $B_{A(I), A(I')}(p_j) \leq r_j$, this suffices for $\{(p_j, r_j)\}$-secret protection, and this is tight, i.e. one cannot achieve smaller $r_j$ than $\max_{I \simeq I'} B_{A(I), A(I')}(p_j)$. Also recall that for DP-SGD, a ``worst-case'' privacy loss distribution and hence blow-up function are known, even for group privacy. Hence, in principle we can use this worst-case blow-up function to compute tight $r_j$ values for DP-SGD.

However, a practical issue with working with privacy loss distributions is that the distribution is continuous, and hence we need to (i) discretize the distribution at some cost in accuracy, and (ii) deal with numerical errors due to working with floating-point precision. This can be handled in a pessimistic manner, i.e. any errors will only result in over-estimating parameters that measure privacy leakage such as the posterior reconstruction probabilities $r_j$, so the reported parameters are safe to use in practice. However, as we will discuss in further detail in \cref{sec:expsecretprotection}, in some settings this makes the computation $r_j$ via blow-up functions extremely unstable, to the point where the over-estimation is often too large to be usable. However, we found that the KL divergence, which is the mean of the privacy loss distribution, remained relatively stable. We can use the following lemma to bound posterior reconstruction probabilities via KL divergence and other divergences:

\begin{lemma}\label{lem:generalized_fano}
Let $\calP$ be a prior distribution over $[k]$ such that $\calP(i) \leq p$ for all $i \in [k]$, let $\calF$ be an $f$-divergence function for strictly convex $f$, i.e. $\calF(P, Q) := \mathbb{E}_{x \sim Q}\left[f(P(x) / Q(x))\right]$ for some convex $f$ such that $f(1) = 0$. Let $P_1, \ldots, P_k$ be distributions such that for all $i, j$ pairs $\calF(P_i, P_j) \leq \calF(Bern(q), Bern(p))$ where $q \geq p$. Then for any adversary $B$, $\Pr_{i \sim \calP, y \sim P_i}[B(y) = i] \leq q$.
\end{lemma}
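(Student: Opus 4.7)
The plan is to reduce the multi-hypothesis reconstruction bound to a comparison between two Bernoulli $f$-divergences, exploiting joint convexity and data-processing to bring the problem into a form where the hypothesis $\calF(P_i, P_j) \leq \calF(Bern(q), Bern(p))$ can be used directly. First I would fix an arbitrary (possibly randomized) adversary $B$, let $S_i := \{y : B(y) = i\}$ (with the obvious adjustment for randomized $B$), and write the success probability as $r := \sum_i \calP(i) P_i(S_i)$; the goal is to show $r \leq q$.

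Next I would introduce the mixture $\bar P := \sum_j \calP(j) P_j$. Joint convexity of any $f$-divergence gives
$\calF(P_i, \bar P) = \calF(\sum_j \calP(j) P_i,\, \sum_j \calP(j) P_j) \leq \sum_j \calP(j) \calF(P_i, P_j) \leq \calF(Bern(q), Bern(p))$,
the last inequality being the hypothesis applied termwise. Lifting to the joint distributions on $[k] \times \calY$, let $\mu$ be the law of $(I, Y)$ with $I \sim \calP$ and $Y \mid I = i \sim P_i$, and let $\tilde\mu := \calP \otimes \bar P$. A direct computation of the Radon--Nikodym derivative gives $\calF(\mu, \tilde\mu) = \sum_i \calP(i) \calF(P_i, \bar P) \leq \calF(Bern(q), Bern(p))$. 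Applying the data-processing inequality to the binary test $(i, y) \mapsto \indi{B(y) = i}$ yields $\calF(Bern(r), Bern(r_0)) \leq \calF(Bern(q), Bern(p))$, where $r_0 := \sum_i \calP(i) \bar P(S_i) \leq p \sum_i \bar P(S_i) = p$ using $\calP(i) \leq p$ and the fact that the $S_i$ partition $\calY$.

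The main obstacle is the last step: deducing $r \leq q$ from this Bernoulli divergence bound combined with $r_0 \leq p \leq q$. My plan is a second data-processing argument. If $r \leq r_0$ we are done, since $r \leq r_0 \leq p \leq q$. Otherwise $r > r_0$, and I would define the randomized channel on $\{0, 1\}$ that fixes $1$ and sends $0 \mapsto 1$ with probability $\alpha := (p - r_0)/(1 - r_0) \in [0, 1]$. This channel sends $Bern(r_0)$ exactly onto $Bern(p)$ and sends $Bern(r)$ onto $Bern(r')$ with $r' := r + (1 - r)\alpha \geq r$. By data processing, $\calF(Bern(r'), Bern(p)) \leq \calF(Bern(r), Bern(r_0)) \leq \calF(Bern(q), Bern(p))$. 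A short calculation using that $f$ is strictly convex with $f(1) = 0$ shows the map $a \mapsto \calF(Bern(a), Bern(p))$ is strictly increasing on $[p, 1]$ (its derivative $f'(a/p) - f'((1-a)/(1-p))$ is strictly positive for $a > p$ because $f'$ is strictly increasing). Hence either $r' \leq p$, in which case $r \leq r' \leq p \leq q$, or $r' > p$, in which case monotonicity and the divergence bound give $r' \leq q$, and again $r \leq r' \leq q$.
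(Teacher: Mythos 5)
Your proposal is correct and follows essentially the same route as the paper's proof: both compare the joint distribution of $(i, y)$ against the product of its marginals (the paper realizes the latter as $(y', i)$ with an independent copy $i' \sim \calP$, you as $\calP \otimes \bar P$), bound the resulting $f$-divergence by $\max_{i,i'}\calF(P_i, P_{i'})$ via convexity, and apply data processing to the success indicator. Your final step — pushing $r_0$ up to exactly $p$ with an explicit channel and then invoking strict monotonicity of $a \mapsto \calF(Bern(a), Bern(p))$ on $[p,1]$ — is a more careful justification of the two-argument monotonicity fact the paper asserts in one line from strict convexity, but it is the same underlying argument.
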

\begin{proof}
This is a generalization of Fano's inequality, and we prove it similarly. Consider the sampling process $i \sim \calP, y \sim P_i, i' \sim \calP, y' \sim P_{i'}$. By (1) the post-processing property of $f$-divergences $\calF(P, Q) \geq \calF(g(P), g(Q))$ and (2) convexity of $f$-divergences: for $w_i \in (0, 1)$, $\calF(\sum_i w_i P_i, \sum_i w_i Q_i) < \sum_i w_i \calF(P_i, Q_i)$, we have (abusing notation to let a random variable denote a distribution):

\begin{align*}
\calF(\mathds{1}(B(y) = i), \mathds{1}(B(y') = i)) &\stackrel{(1)}{\leq} \calF((y, i), (y', i)) \\
&= \calF(\sum_{i, i'} \calP(i) \calP(i') (P_i, i), \sum_{i, i'} \calP(i) \calP(i') (P_{i'}, i))\\ &\stackrel{(2)}{\leq} \max_{i, i'} \calF(P_i, P_{i'}).\\
&\leq \calF(Bern(q), Bern(p)).
\end{align*}

Now assume that $Pr[B(y) = i] > q$. Since $f$ is strictly convex, this implies that for $q' > q \geq p \geq p'$, $\calF(Bern(q'), Bern(p')) > \calF(Bern(q), Bern(p))$.  For any $B$, we have $Pr[B(y') = i] \leq p$, since $B$ has no knowledge of $i$. So if $Pr[B(y) = i] > q$, we have $Pr[B(y) = i] > q \geq p \geq Pr[B(y') = i]$ and hence $\calF(\mathds{1}(B(y) = i), \mathds{1}(B(y') = i)) > \calF(Bern(q), Bern(p))$, which is a contradiction.
\end{proof}

The KL divergence is the $f$-divergence given by $f(y) = y \log y$, which is strictly convex. So another strategy for bounding the posterior reconstruction probability of $y_j$, $r_j$, given a worst-case privacy loss distribution for $A(I), A(I'), I \simeq_j I'$ is to compute the mean of the privacy loss distribution $\mu$, then compute the $r_j > p_j$ such that $KL(Bern(r_j), Bern(p_j)) = \mu$. If $\mu$ is computed from an algorithm with a noise level, we can use binary search to invert this procedure, finding the minimum noise level that achieves KL divergence bound $KL(Bern(r_j), Bern(p_j))$ for our target $r_j$. Unlike using the blow-up function to compute $r_j$, this is not tight (and cannot be, as the blow-up function $T:[0, 1] \rightarrow [0, 1]$ contains strictly more information than the single parameter $\mu$), but in our experiments proved to be more numerically stable.

\section{Algorithms for Secret Protection}\label{sec:algorithms}

Given \cref{def:secretprotection}, we can now consider algorithms for model training that achieve secret protection. As discussed in Section~\ref{sec:protReconst}, it is straightforward to derive posterior reconstruction probability bounds for DP-SGD. However, if we train on the entire dataset with DP-SGD, some secrets may appear many more times than others (relative to their desired level of secret protection), and for these secrets we may unnecessarily increase the noise multiplier in DP-SGD, worsening its utility.  Hence to achieve $\{(p_j, r_j)\}$-secret protection at minimal cost in utility, we want to preprocess the dataset such that each secret $y_j$ appears in the dataset a number of times roughly equal to an appropriate function of its corresponding $(p_j, r_j)$. At the same time, we want to make sure that when bounding the number of appearances of each secret, we do not discard too much of the data.

Our main algorithmic idea is to formulate a linear program for the problem of selecting a subset of the data such that we bound the number of appearances of each secret while keeping as many examples as possible. The linear program solution gives weights for each example, and then we can use Poisson sampling to form batches where each example's sampling probability is proportional to its assigned weight. That is, we will first compute a weight $w_i$ for each example $i$. Then, if we sample example $i$ with probability $\frac{B w_i}{ \sum_{i'} w_{i'}}$ (which is in $[0, 1]$ assuming $\max_i w_i \leq \sum_{i'} w_{i'} / B$), our expected batch size is $B$. Once we've decided on the sampling probabilities, we can set the noise multiplier to the minimum value such that each secret's target privacy guarantee is achieved.

We formulate the following linear program for assigning weights to examples:

\begin{equation}
\max  \sum_{i \in D} w_i, \qquad
s.t. \forall j \in S: \sum_{i: (i, j) \in E} w_i \leq c \mu_j, \qquad
\forall i \in D: w_i \in [0, 1].\label{eq:secret_lp}
\end{equation}

The objective is to maximize the sum of the weights, i.e. include as many examples as possible. We constrain $w_i$ to be in $[0, 1]$, which ensures that $\frac{B w_i}{ \sum_{i'} w_{i'}}$ is a probability as long as we find a solution to the LP where $\sum_{i'} w_{i'} > B$. For each secret $j$, we write a constraint saying the sum of the weights assigned to it is at most a target $\mu_j$, times a constant $c$ which we set before solving the program.

We set $\mu_j$ to be the KL divergence that guarantees posterior reconstruction probability at most $r_j$ by \cref{lem:generalized_fano}. This is based on the observation by past work that DP-SGD's group privacy parameter is roughly linear in the group size \cite{charles2024finetuninglargelanguagemodels}, which $\sum_{i: (i, j) \in E} w_i$ is a fractional version of. 
We note that ideally, we would include the noise multiplier $\sigma$ as a variable in the program and encode the downstream accounting that computes $\mu_j$ as a constraint. However, this accounting is computationally expensive and potentially would give a non-convex constraint.

The choice of the constant $c$ serves to trade off between preventing discarding too many examples and avoiding overrepresented secrets. When $c$ is sufficiently large, the constraints $\sum_{i: (i, j) \in E} w_i \leq c \mu_j$ can never be saturated, i.e. only the constraint $w_i \in [0, 1]$ matters and hence the optimal solution is just to set all $w_i = 1$. This is equivalent to running DP-SGD on the full dataset, which includes as many examples as possible but will require a high noise multiplier due to some secrets being overrepresented. On the other hand, when $c$ is sufficiently small, the constraint $w_i \leq 1$ can never be saturated because the constraints $\sum_{i: (i, j) \in E} w_i \leq c \mu_j$ are stricter. By well-known properties of basic feasible solutions to linear programs, $n = |D|$ constraints must be tight for the optimal solution, and there are only $m = |S|$ constraints of the form $\sum_{i: (i, j) \in E} w_i \leq c \mu_j$, which means $n - m$ constraints of the form $w_i \geq 0$ will be tight. In other words, if $c$ is too small, the solution will assign $w_i = 0$ to all but $m$ examples, i.e. discard a large fraction of examples. As we interpolate between large and small values of $c$, we will trade off between these two extreme solutions. In our experiments, we include $c$ as a hyperparameter to be tuned. Note that under our adjacency definition, the linear program (and its solution) can be treated as public information since we assume $G$ is public. Hence, sweeping $c$ has the same privacy ramifications as sweeping other parameters like learning rate, whose privacy cost is commonly ignored both in research settings and in practice.

\mypar{Accounting for secret protection}
A slight variation of Theorem 1 of \cite{charles2024finetuninglargelanguagemodels} shows that if examples in $E^{-1}(y_j) := \{x \in D: y_j \in E(x)\}$ have sampling probabilities $\rho_1, \ldots, \rho_k$ and we do $T$ rounds of DP-SGD with Poisson sampling, then $N(\sum_{i=1}^k Bern(\rho_i), \sigma^2)^T$, $N(0, \sigma^2)^T$ is a dominating pair for DP-SGD with noise multiplier $\sigma$ and pairs of inputs differing only in $E^{-1}(y_j)$. The open source \texttt{dp\_accounting} library supports computing the privacy loss distribution for this pair of distributions, and hence the KL divergence between the dominating pair which is an upper bound on the KL divergence between the outputs of DP-SGD. Hence using \cref{lem:generalized_fano}, given the sampling probabilities $\rho_j$ from the linear program, we can easily compute an indistinguishability guarantee for each secret. Then given the sampling probabilities, we can perform binary search on $\sigma$ to find (approximately) the minimum $\sigma$ such that all target indistinguishability guarantees are met, and run DP-SGD with these sampling probabilities and noise multiplier. 
We summarize our end-to-end training procedure, including the noise calibration, in \cref{fig:secret_protection_algorithm}.

\begin{figure}
\begin{algorithm}[H]
\caption{Secret protection training pipeline}
\label{fig:secret_protection_algorithm}
\textbf{Inputs:} Dataset $D = \{x_1, \ldots, x_n\}$, loss $\ell$, optimizer update function $update$, secrets $S$, secret protection parameters $(p_j, r_j)$, number of rounds $T$, target batch size $B$, clip norm $C$, noise multiplier $\sigma$, initial model $\theta_0$.
\begin{algorithmic}[1]
\State For each secret $j$, $\mu_j \leftarrow KL(Bern(r_j), Bern(p_j))$.
\State $\{w_i\} \leftarrow$ solution to \eqref{eq:secret_lp} using the chosen $\mu_j$ values.
\State For each example $i$, $\rho_i = \frac{B w_i}{\sum_{i'} w_{i'}}$
\State For each secret $j$, $\sigma_j = \min \{\sigma: KL(N(\sum_{x_i \in D_j} Bern(\rho_i), \sigma^2)^T, N(0, \sigma^2)^T) \leq \mu_j\}$
\State $\sigma = \max_j \sigma_j$
\State $\clip{\bfv}{C} := \bfv \cdot \min\{1, C / \ltwo{\bfv}\}$
\For{$t \in [T]$}
\State $\calB_i \leftarrow$ batch including $x_i$ independently w.p. $\rho_i$.
\State $\bfg_i \leftarrow \frac{1}{B}\sum_{x_i \in \calB_i} \nabla \clip{\ell(\theta_{i-1}; x_i)}{C}$
\State $\bfz_i \leftarrow N(0, \frac{C^2 \sigma^2}{B^2} \mathbb{I})$
\State $\theta_i \leftarrow update(\theta_{i-i}, \bfg_i + \bfz_i)$. \Comment{e.g. for SGD, $update(\theta, \bfg) = \theta - \eta \bfg$}
\EndFor
\end{algorithmic}
\end{algorithm}
\end{figure}

\section{Experiments}\label{sec:experiments}

\subsection{Secret Protection}\label{sec:expsecretprotection}

We now empirically investigate the effectiveness of our algorithm for training tailored to \cref{def:secretprotection}. Again, since we wish to use a public dataset (without actual business secrets), we use the arXiv dataset \cite{clement2019arxiv} as a benchmark, as it still represents a collection of knowledge on specialized topics. The arXiv dataset consists of 1.9 million papers uploaded to arXiv.org, including their abstract. We randomly split the dataset into a training set of 1.7 million examples, a test set of 0.1 million examples, and a validation set of 0.1 million examples. We fine-tune a pretrained BERT-Tiny model on the abstracts of examples in the arXiv dataset for the masked language modeling task as defined in \cite{devlin2019bert}. We sort all words appearing in any abstract by the number of abstracts they appear in, take the 50001st to 150000th words in decreasing order, and treat these as secrets. We treat each abstract as containing a secret if it contains the corresponding word. This set of secrets each appears roughly 50 to 100 times each. We use a fixed $p_j = 10^{-10}$ and assign each secret a uniformly random $r_j \in [2 \cdot 10^{-4}, 10^{-3}]$. We discard any examples not containing secrets for simplicity. We fix a batch size of $2048$ and $2000$ iterations of training, and sweep the learning rate. We fix the clip norm to be $1.0$ as we found this to be reasonably effective in most settings. To simulate training with more resources on a larger dataset, in our privacy accounting we assume our batch size and dataset size are 10 times larger, which is equivalent to reducing the noise multiplier by a factor of 10\footnote{Since we decrease the scale of the noise, this should only make any comparisons more unfavorable for our LP-based approach.}.

We compare our linear program-based preprocessing with the baseline of no preprocessing. For simplicity in \cref{def:secretprotection} we let the predicates $T_j$ be always 1, i.e. two adjacent datasets can differ in the examples containing the sensitive secret $s_j$ arbitrarily. In Figure~\ref{fig:fraction-sp} we first show for each $c = 2^k$ where $k \in \{-6, -5, \ldots, 4\}$ what fraction of examples are retained. For setting the constant in the linear program to be $c = 2^4$, our LP solution assigns all examples a weight of $1$, i.e. does no preprocessing of the dataset. So we will compare these choices of $c$, and demonstrate improvements over $c = 2^4$. In Figure~\ref{fig:nonpriv_testloss-sp} we compare the test losses achieved by training without noise on the resulting datasets, where we see only a small $\approx 0.04$ increase in test loss due to the pre-processing step even though it discards many examples. In contrast, in Figure~\ref{fig:nm-sp}, we compare the noise multipliers we need to achieve the desired bounds on posterior reconstruction probabilities via KL divergence bounds. We see that our preprocessing can lead to a significant multiplicative reduction of $\approx 8$ in the noise multiplier compared to no preprocessing. In Figure~\ref{fig:testloss-sp} we plot the test losses and see that the noise multiplier reduction from preprocessing strongly outweighs the small benefits from having the whole dataset, leading to a substantial decrease in the loss. 

\begin{figure}[H]
\centering
\begin{minipage}{0.4\textwidth}
    \centering
    \includegraphics[width=0.95\textwidth]{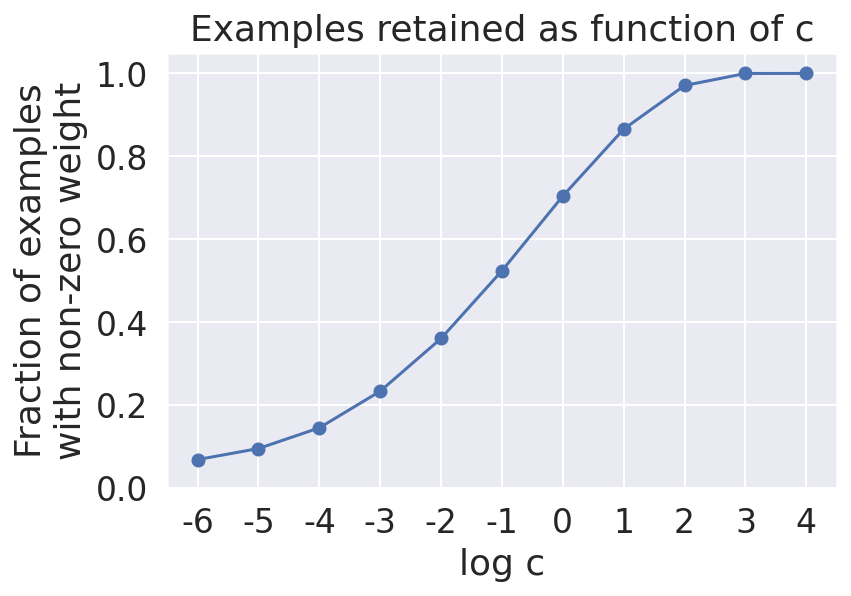}
    \subcaption{The fraction of examples in the dataset which are assigned non-zero weight.}
    \label{fig:fraction-sp}
\end{minipage}%
\hspace{1cm}
\begin{minipage}{0.4\textwidth}
    \centering
    \includegraphics[width=0.95\textwidth]{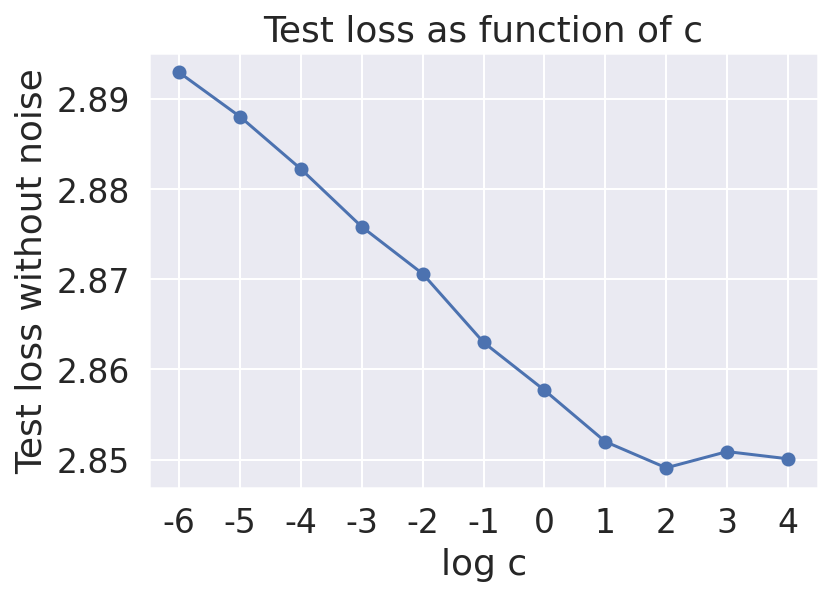}
    \subcaption{The test loss achieved by the different LP solutions, without any noise.}
    \label{fig:nonpriv_testloss-sp}
\end{minipage}
\begin{minipage}{0.4\textwidth}
    \centering
    \includegraphics[width=0.95\textwidth]{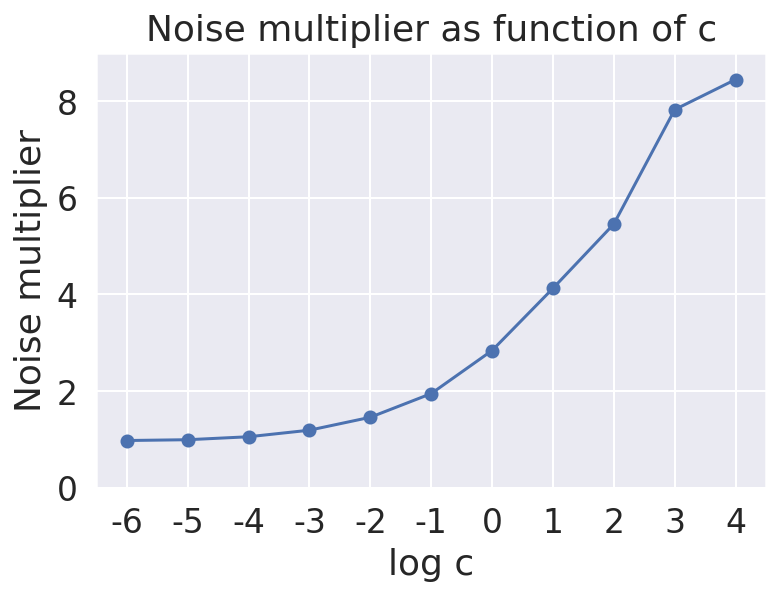}
    \subcaption{The noise multiplier needed for the desired reconstruction probability bounds.}
    \label{fig:nm-sp}
\end{minipage}%
\hspace{1cm}
\begin{minipage}{0.4\textwidth}
    \centering
    \includegraphics[width=0.95\textwidth]{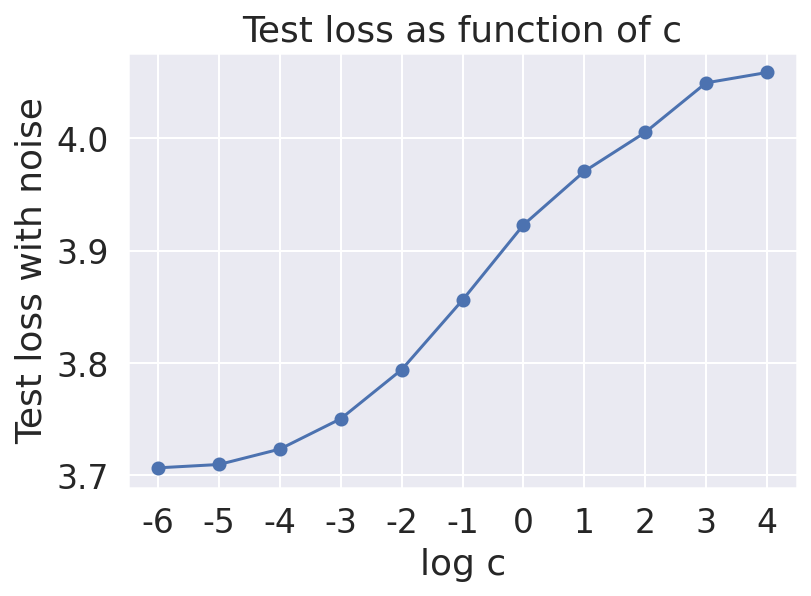}
    \subcaption{The test loss achieved by the different LP solutions.}
    \label{fig:testloss-sp}
\end{minipage}
\caption{Comparison of LP solutions for different values of $c$. In  \cref{fig:testloss-sp}, the 95\% confidence intervals have radius at most $.001$.}
\end{figure}

\mypar{Infeasibility of the Blow-Up Function} In our experiments we attempted to use the privacy loss distribution to compute the blow-up function, which would allow us to use the tight bounds on reconstruction probability of \cite{hayes2023bounding}. Using the \texttt{dp\_accounting} tools for computing the privacy loss distribution, we get a discrete privacy loss distribution with support $v_1, \ldots, v_k$ and probabilities $p_1, \ldots, p_k$. The corresponding blow-up function evaluated at $\sum_{i = j}^k p_i$ has a value of $\sum_{i = j}^k e^{v_i} p_i$, and linearly interpolates between these points. In particular, by definition of the blow-up function it should be the case $\sum_{i=1}^k e^{v_i} p_i = 1$. In our experiments, when computing the privacy loss distribution for two datasets differing in a single example at the noise multipliers in Figure~\ref{fig:nm-sp}, we did find the equality $\sum_{i=1}^k e^{v_i} p_i = 1$ held. However when we did the same calculation for two datasets differing in a secret ($\approx 100$ examples), we found instances where $\sum_{i=1}^k e^{v_i} p_i \geq 2 \cdot 10^4$. In other words, the numerical instability and pessimism in PLD accounting (specifically for \textit{Mixture of Gaussians mechanisms}, i.e. Gaussian mechanisms whose sensitivity's support is larger than $\{0, 1\}$) heavily inflates the blowup function. For this reason we opted to use the KL divergence-based reconstruction bounds of \cref{lem:generalized_fano} instead.

\subsection{Reconstructing Pre-Training data}

While our primary focus is on protecting pre-defined secrets, the underlying principle—that moderate indistinguishability guarantees, potentially weaker than standard DP, can suffice to limit reconstruction attacks can also apply to the broader problem of preventing training data extraction from large language models. Large language models (LLMs) are known to memorize portions of their training data, which can then be extracted \cite{nasr2023scalable,carlini2021extracting, carlini2022quantifying}. This section focuses on reconstruction attacks (also known as extraction attacks) against large language models to validate that moderate indistinguishability guarantees suffice to protect against reconstruction.  These attacks aim to extract training data from the model. We adopt a similar experimental setup and definition of memorization. 

\begin{definition}[(k-memorization )~\cite{carlini2022quantifying}]

A string $s$ is extractable with $k$ tokens of context from a model $f$ if there exists a (length-$k$) string $p$, such that the concatenation $[p || s]$ is contained in the training data for $f$, and $f(p) = s$ using greedy decoding.

\end{definition}

It's important to note that this definition might not encompass all possible forms of memorization~\cite{ippolito2022preventing}. However, it serves as a valuable proxy for measuring the extent of data extraction and provides a benchmark for comparison. This metric is widely used in various contexts, including copyright infringement cases, to assess the unauthorized reproduction of training data.

\mypar{Pretraining Setup} For our experiments, we pretrained a series of causal language models using the Gemma-2 architecture~\cite{gemmateam2024gemma2improvingopen} and the fineweb-edu dataset~\cite{penedo2024finewebdatasetsdecantingweb} tokenized using the Gemma-2 tokenizer. We explored various model sizes within the Gemma-2 family.

\mypar{Extraction Experiments}
To evaluate the reconstruction attack on our trained models, we fix a subset of the dataset. We prompted each model with the first $n$ tokens of a sequence from the subset and observed whether it could generate the subsequent $m$ tokens.  Our evaluation focused exclusively on greedy sampling, where the model selects the most probable token at each step.

\begin{figure}[H]
\centering
\begin{minipage}{0.49\textwidth}
    \centering
    \includegraphics[scale=0.5]{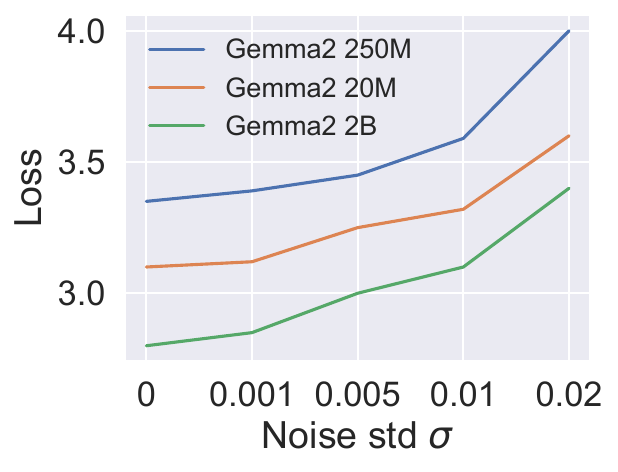}
    \caption{Effect of noise on the validation loss}
    \label{fig:loss_pretraing}
\end{minipage}
\hfill
\begin{minipage}{0.49\textwidth}
    \centering
    \includegraphics[scale=0.5]{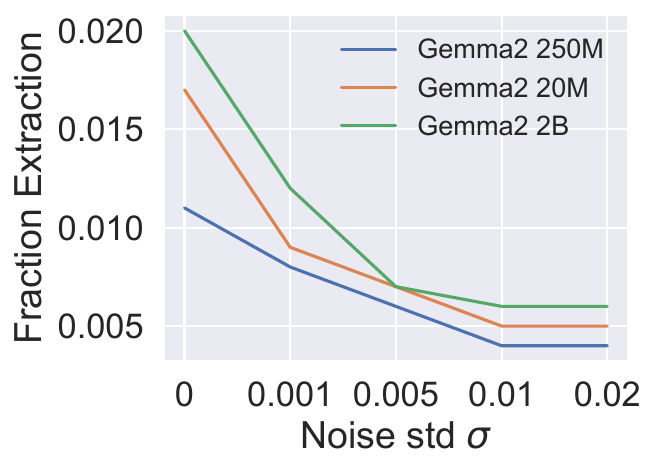}
    \caption{Fraction memorized}
    \label{fig:mem_frac}
\end{minipage}
\end{figure}

\mypar{Results} Figure~\ref{fig:loss_pretraing} demonstrates the impact noise addition has on training. With small noise multiplier, the effect on model utility is minimal. Figure~\ref{fig:mem_frac} highlights the effectiveness of small noise multipliers in mitigating memorization.  A small amount of noise leads to a substantial reduction in memorized instances. Interestingly, we see even with large noise multiplier ($\geq 0.01$) there is some remaining memorization. 
We manually looked at the memorized examples for higher noise multipliers and observed that remaining instances are primarily attributed to generalization. For example, prompting the model with "12345" results in the continuation "6789", even with increased noise levels. This suggests the model learns underlying patterns rather than simply memorizing specific sequences.

\bibliographystyle{alpha}
\bibliography{ref} 
\end{document}